\newtheorem{lemma}{Lemma}
\let\l@ENGLISH\l@english
\begin{document}
\title{On the Security of Millimeter Wave  \\ Vehicular  Communication Systems using\\  Random Antenna Subsets}
\author{Mohammed E. Eltayeb, Junil Choi, Tareq Y. Al-Naffouri$^*$, and Robert W. Heath, Jr. \\
The University of Texas at Austin, Emails: \{meltayeb,junil.choi,rheath\}@utexas.edu.\\
$^*$King Abdullah University of Science and Technology, Email: tareq.alnaffouri@kaust.edu.sa
}
\maketitle

\begin{abstract}
Millimeter wave (mmWave) vehicular communication systems have the potential to improve traffic efficiency and safety. Lack of secure communication links, however, may lead to a formidable set of abuses and attacks. To secure communication links, a physical layer precoding technique for mmWave vehicular communication systems is proposed in this paper. The proposed technique exploits the large dimensional antenna arrays available at mmWave systems to produce direction dependent transmission. This results in coherent transmission to the legitimate receiver and artificial noise that jams eavesdroppers with sensitive receivers. Theoretical and numerical results demonstrate the validity and effectiveness of the proposed technique and show that the proposed technique provides high secrecy throughput when compared to conventional array and switched array transmission techniques.

\end{abstract}

	\begin{IEEEkeywords} Millimeter-wave, PHY security, beamforming, V2X.  \end{IEEEkeywords}
\section{Introduction} \label{sec:Intro}


Future intelligent transportation systems (ITSs) will rely on vehicular communication technologies to enable a variety of applications for safety, traffic efficiency, driver assistance, and infotainment \cite{r1}, \cite{r2}.  The abundance of bandwidth in the  millimeter wave (mmWave) will enable higher data rate communication between vehicles to exchange the raw data from LIDAR, radar, and other sensors to support advanced driver-assisted and safety-related functionalities.

Like regular communication systems, vehicular communication systems are prone to a set of abuses and attacks that could jeopardize the efficiency of transportation systems and the physical safety of vehicles and drivers. Security threats and attacks in vehicular environments have been described in detail in \cite{r2}-\cite{r7}. To preserve the privacy and security of vehicular communication networks, higher-layer encryption techniques have been proposed in \cite{r7}-\cite{r9}, and references therein. These techniques, however, are based on digital signature methods that require vehicles to store large number of encryption keys. These keys must also be regularly exchanged, hence creating, in addition to the processing overhead, an additional communication overhead \cite{r1}, \cite{r2}. With the emergence of new time-sensitive ITS applications and the increasing size of the decentralized vehicular network, the implementation of these higher-layer techniques becomes complex and challenging. Additionally these techniques fail to secure communication links if the encryption keys are compromised, and/or key distribution becomes difficult.

To mitigate these challenges, \emph{keyless} physical layer (PHY) security techniques \cite{dm401}-\cite{dm5}, which do not directly rely on upper-layer data encryption or secret keys, can be employed to secure vehicular communication links. These techniques are based on the use of one or more of the following approaches: (i) directional beamforming \cite{dm402}-\cite{db2}, (ii) distributed arrays (or nodes) \cite{anr1} \cite{dm403}, (iii) dual-beam techniques \cite{dual}, and (iv) switched arrays \cite{s1}-\cite{dm5}. In all of these techniques, the transmitter uses multiple antennas to create interference or \textit{artificial noise} within the null space of the receiver. Such interference will degrade the channel of potential eavesdroppers and will not impact the receiver's channel.

Despite their effectiveness, the techniques in \cite{dm402}-\cite{dual}  require fully digital baseband precoding, which is difficult to realize in mmWave vehicular systems. This is due to the limited number of available RF chains and additional hardware and power constraints in mmWave systems \cite{pi2011} \cite{ahmed}. While it might be justifiable to deploy large number of RF chains in cellular base stations with relaxed power constraints, this leverage might not be possible in vehicular transmitters. Due to both power and hardware constraints, switched array techniques \cite{s1}-\cite{dm5} might be suitable for vehicular mmWave systems with large number of antennas. These techniques use antenna switches to associate a subset of antennas with every transmission symbol. This results in coherent transmission towards the receiver and induces artificial noise in all other directions. Nonetheless, it was recently shown that switched array security techniques can be breached  by exploiting the sparsity in the transmit antenna array \cite{attack1} \cite{attack2}.  


In this paper, we propose a new PHY security  technique for vehicular communications that does not require the exchange of keys between vehicles. We assume a single transmitter communicating with a single receiver in the presence of an eavesdropper as shown in Fig. \ref{fig:mod4}.  The proposed technique uses an antenna architecture with a single RF chain (instead of multiple RF chains), and performs analog precoding with antenna selection. To transmit an information symbol, a random set (called a subset) of antennas are co-phased to transmit the information symbol to the receiver, while the remaining antennas
are used to randomize the far field radiation pattern at non-receiver directions. This results in coherent transmission to the receiver and a noise-like signal that jams potential eavesdroppers with sensitive receivers. Note that the proposed technique is robust to plaintext attacks and other attacks highlighted in \cite{attack1} and \cite{attack2} since a random subset is associated with every data symbol, and all antennas are used for data transmission, i.e. the array is no longer sparse.


	\begin{figure}
		\begin{center}
\includegraphics[width=3in]{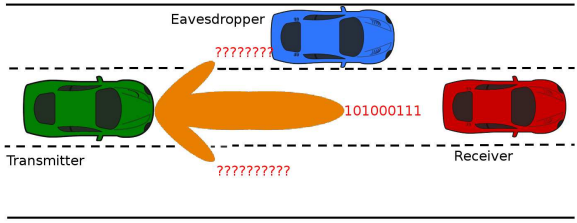}
				\caption{Vehicle to vehicle communication with a possible eavesdropper. The transmitting vehicle  is communicating with a target vehicle  while the eavesdropper tries to intercept the transmitted data.}
			\label{fig:mod4}
		\end{center}
	\end{figure}

\section{System Model} \label{sec:Model}

We consider a mmWave system with a transmitting vehicle (transmitter) and a receiving vehicle (receiver) in the presence of a single or multiple eavesdroppers as shown in Fig. \ref{fig:mod4}. The transmitter is equipped with $N_{\text{T}}$ transmit antennas and a single RF chain as shown in Fig. \ref{fig:ant}. The receiver is equipped with $N_{\text{R}}$ antennas. We adopt a uniform linear array (ULA) with isotopic antennas along the x-axis with the array centered at the origin; nonetheless, the proposed technique can be adapted to arbitrary antenna structures. Since the array is located along the x-y plane, the receiver's location is specified by the azimuth angle of arrival/departure (AoA/AoD). The transmitter is assumed to know the angular location of the target receiver but not of the potential eavesdroppers.

The transmit data symbol $s(k)\in \mathbb{C}$, where $\mathbb{E}[|s(k)|^2]=1$ and $k$ is the symbol index,  is multiplied by a unit norm transmit beamforming vector $\mathbf{f} = [f_1 \quad f_2 \quad ... \quad f_{N_\text{T}}]\in \mathbb{C}^{N_{\text{T}}} $ with $f_n$ denoting the complex weight on transmit antenna $n$. At the receiver, the received
signals on all antennas are combined with a unit norm receive combining vector $\mathbf{w} = [w_1 \quad w_2 \quad ... \quad w_{N_\text{R}}]\in \mathbb{C}^{N_{\text{R}}} $.  Assuming a narrow band line-of-sight (LOS) channel with perfect synchronization, the received signal can be written as
\begin{eqnarray}\label{c1mimo}
y(k,\phi,\theta) = \sqrt{P \alpha}\mathbf{w}^*\mathbf{H}(\phi,\theta)\mathbf{f}s(k) + z(k),
\end{eqnarray}
where $\phi$ is AoA at the receiver and $\theta$ is AoD from the transmitter, $P$ is the transmission power and $\alpha$ is the path-loss. In (\ref{c1mimo}), $\mathbf{H}$ is the ${N_\text{R}\times N_\text{T}}$ channel matrix that represents the mmWave channel between the transmitter and the receiver, and $z(k) \sim  \mathcal {CN}(0,\sigma^2)$ is the additive noise. 

Due to the dominant reflected path from the road surface, a two-ray model is usually adopted in the literature to model LOS vehicle-to-vehicle
communication \cite{cv1}-\cite{cv4}. Based on this model, the channel $\mathbf{H}$ can be modeled using array manifold concepts as $\mathbf{H}~=~g\mathbf{a}_\mathrm{r}(\phi)\mathbf{a}^*_\mathrm{t}(\theta)$,  where $g = \frac{1}{\sqrt{2}}(1+re^{j\Phi})$, $r=-1$ is the reflection coefficient of the road,
and $\Phi$ is the angle between the direct and the reflected path and it is given by $\Phi = \frac{2\pi}{\lambda} \frac{2h_\text{t}h_\text{r}}{D}$ \cite{cv4}. The variables $h_\text{t}$, $h_\text{r}$ represent the height of the transmitting and receiving antennas respectively, $D$ is the distance between the vehicles, and  $\lambda$ is the wavelength.  {{Since $h_\text{t} h_\text{r} \ll D$, which is typical in vehicular environments, the angle $\Phi$ is small and both the LOS and non-LOS paths combine at the azimuth AoA $\phi$.}}  The vector $ \mathbf{a}_\mathrm{r}(\phi)$ represents the receiver's array manifold corresponding to the AoA $\phi$ and the vector $\mathbf{a}_\mathrm{t}(\theta)$  represents the transmitter's array manifold corresponding to the AoD $\theta$. For simplicity, the receiver's beam is assumed to be aligned to towards the transmitter for maximum reception {{i.e. $\mathbf{w} = \mathbf{a}_\mathrm{r}(\phi)$. Hence, the received signal along an angle $\theta$ becomes
\begin{eqnarray}
y(k,\theta) &=&  \sqrt{P \alpha} \mathbf{a}^*_\mathrm{r}(\phi) \mathbf{H}(\phi,\theta) \mathbf{f}s(k) + z(k) \\ \label{c1}
&=&  \sqrt{P \alpha N_\text{R}}\mathbf{h}^*(\theta)\mathbf{f}s(k) + z(k),
\end{eqnarray}  
where }} $N_\text{R}$ is the array gain at the receiver. For a ULA with $d \leq \frac{\lambda}{2}$ antenna spacing, the channel for a receiver located along
the $\theta$ direction can be written as $\nonumber \mathbf{h}^*(\theta) = g [e^{-j( \frac{N-1}{2}) \frac{2\pi d}{\lambda} \cos \theta} \quad e^{-j( \frac{N-1}{2}-1) \frac{2\pi d}{\lambda} \cos \theta} \quad \\ ... \quad e^{j( \frac{N-1}{2}) \frac{2\pi d}{\lambda} \cos \theta}   ]$ \cite{trees}.

	\begin{figure}
		\begin{center}
\includegraphics[width=2.2in]{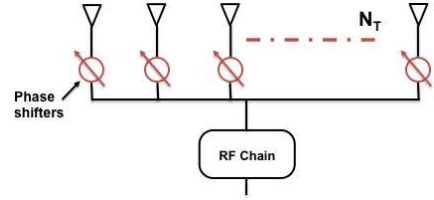}
				\caption{Array architecture with $N_\text{T}$ antennas and a single RF chain.}
			\label{fig:ant}
		\end{center}
	\end{figure}

\section{Millimeter wave Secure Transmission } \label{sec:ST}
In this section, we introduce a novel analog PHY security technique that randomizes the data stream at the eavesdropper without the need for a fully digital array or antenna switches as done in \cite{dm4}-\cite{dm5}.  Instead of using all antennas for beamforming,  a set of random antennas are used for coherent beamforming, while the remaining antennas  are set to destructively combine at the receiver. The indices of these antennas are randomized in every
symbol transmission. This randomizes the beam pattern sidelobes and as a result, produces noise-like signals at non-receiver directions. Although the target receiver would observe a fixed gain reduction (due to destructive combining), malicious eavesdroppers will observe a noise-like interference. 

Let $\mathcal{I}_M(k)$ be a random subset of $M$ antennas used to transmit the $k$th symbol, $\mathcal{I}_L(k)$ be a subset that contains the indices of the remaining antennas, $\mathcal{E}_L(k)$ be a subset that contains the even entries of   $\mathcal{I}_L(k)$, and $\mathcal{O}_L(k)$ be a subset that contains the odd entries of $\mathcal{I}_L(k)$.  The progressive inter-antenna phase shifts are set as
  \begin{equation}\label{efbp}
\Upsilon_{n}(k)  = \left\{
               \begin{array}{ll}
               \Psi_{n}(k), & \hbox{  $n\in \mathcal{I}_M(k)$  } \\
                   \Psi_{n}(k), & \hbox{  $n\in \mathcal{E}_L(k)$  }  \\
                  -j\log(-e^{ j \Psi_{n}(k)}), & \hbox{  $n\in \mathcal{O}_L(k)$  } \\
               \end{array}
               \right.
\end{equation}
where $\Psi_{n}(k) = { \left(\frac{N-1}{2}-n \right)2\pi \frac{d}{\lambda} \cos(\theta_{\text{R}})}$, and $\theta_{\text{R}}$ is the transmit direction towards the legitimate receiver. Note in (\ref{efbp}) $e^{j( -j\log(-e^{ j \Psi_{n}(k)}))} = -e^{j\Psi_{n}(k)}$, and the $n$th entry of the beamforming vector $\mathbf{f}(k)$ is $f_n(k)=\frac{1}{\sqrt{N_\text{T}}}e^{ j \Upsilon_{n}(k)}$. Using (\ref{c1}) and (\ref{efbp}), the received signal at the receiver along $\theta_\text{R}$ becomes
\begin{eqnarray}\label{yra8}
\begin{aligned} 
 & \hspace{-1mm} y_{\text{R}}(k,\theta_{\text{R}})  =   \sqrt{\frac{P\alpha N_\text{R}}{{N_\text{T}}}} g_\text{R} s(k) \times  \\
  &  \bigg( \sum_{m \in \mathcal{I}_M(k)} e^{-j\left(\frac{N-1}{2}-m\right)  \frac{2\pi d}{\lambda}  \cos\theta_\text{R} }  e^{j \left(\frac{N-1}{2}-m\right)  \frac{2\pi d}{\lambda}  \cos\theta_\text{R} }  +  \\  & \hspace{-0.5mm} \sum_{n \in \mathcal{E}_L(k)} e^{-j \left(\frac{N-1}{2}-n\right)  \frac{2\pi d}{\lambda}  \cos\theta_\text{R} }  e^{j \left(\frac{N-1}{2}-n\right)  \frac{2\pi d}{\lambda}  \cos\theta_\text{R} } - \\ \hspace{-3mm} & \sum_{n \in \mathcal{O}_L(k)} \hspace{-3mm} e^{-j \left(\frac{N-1}{2}-n\right)  \frac{2\pi d}{\lambda}  \cos\theta_\text{R} } e^{j \left(\frac{N-1}{2}-n\right)  \frac{2\pi d}{\lambda}  \cos\theta_\text{R} }\bigg)  + z (k),
\end{aligned}
\end{eqnarray}        
where $g_\text{R}$ is the channel gain at the receiver, $N_\text{R}$ is the receiver's array gain, and $N_\text{T}$ is a power normalization term. Grouping similar terms in (\ref{yra8}) we obtain         
 \begin{eqnarray}\label{y1b1s}
 \begin{aligned} 
 & \hspace{-5mm} y_{\text{R}}(k,\theta_{\text{R}}) =  s(k) g_\text{R} \sqrt{\frac{P \alpha  N_\text{R}}{{N_\text{T}}}} \times \\ & \bigg(  \sum_{m \in \mathcal{I}_M(k)}  e^{j \left(\frac{N-1}{2}-m\right)  \frac{2\pi d}{\lambda} (\cos\theta_\text{R}- \cos\theta_\text{R}) } \\ &+ \sum_{n \in \mathcal{E}_L(k)}  e^{j \left(\frac{N-1}{2}-n\right)  \frac{2\pi d}{\lambda} (\cos\theta_\text{R}- \cos\theta_\text{R}) }  \\  &- \sum_{n \in \mathcal{O}_L(k)}  e^{j \left(\frac{N-1}{2}-n\right)  \frac{2\pi d}{\lambda} (\cos\theta_\text{R}- \cos\theta_\text{R}) }  \bigg) + z (k)    \\
&=   \underbrace{s(k)}_{\substack{\text{ information }\\\text{symbol}}}  \underbrace{g_\text{R}\sqrt{P\alpha}}_{\substack{\text{effective channel }\\\text{gain}}} \underbrace{M\sqrt{\frac{N_\text{R}}{{N_\text{T}}}}}_{\text{array gain}}  +  \underbrace{z(k)}_{\substack{\text{additive }\\\text{noise}}}.
 \end{aligned}
 \end{eqnarray}
Similarly, the received signal at the eavesdropper can be written as
\begin{eqnarray}\label{arnl}
\begin{aligned} 
 &y_{\text{E}}(k,\theta_{\text{E}}) =   \sqrt{\frac{P\alpha_\text{E} N_\text{E}}{{N_\text{T}}}}g_\text{E} s(k) \times  \\&  \bigg( \sum_{m\in \mathcal{I}_M(k)} e^{-j\left(\frac{N-1}{2}-m\right)  \frac{2\pi d}{\lambda}  \cos\theta_\text{E} }  e^{j \left(\frac{N-1}{2}-m\right)  \frac{2\pi d}{\lambda}  \cos\theta_\text{R} }  + \\ &  \sum_{n \in \mathcal{E}_L(k)}   e^{-j \left(\frac{N-1}{2}-n\right)  \frac{2\pi d}{\lambda}  \cos\theta_\text{E} } e^{j \left(\frac{N-1}{2}-n\right)  \frac{2\pi d}{\lambda}  \cos\theta_\text{R} } - \\   & \sum_{n \in \mathcal{E}_L(k)}  \hspace{-1mm}   e^{-j \left(\frac{N-1}{2}-n\right)  \frac{2\pi d}{\lambda}  \cos\theta_\text{E} } e^{j \left(\frac{N-1}{2}-n\right)  \frac{2\pi d}{\lambda}  \cos\theta_\text{R} } \hspace{-1mm}  \bigg)  \hspace{-1mm} +\hspace{-1mm}  z_\text{E}(k)\\ 
    &=  \underbrace{s(k)}_{\substack{\text{ information }\\\text{symbol}}}  \underbrace{g_\text{E}\sqrt{P\alpha_\text{E} }}_{\substack{\text{effective channel }\\\text{gain}}} \underbrace{\sqrt{N_\text{E}}\beta}_{\substack{\text{artifical }\\\text{noise}}}  +  \underbrace{z_\text{E}(k)}_{\substack{\text{additive }\\\text{noise}}},  
   \end{aligned}
   \end{eqnarray}
where $\alpha_\text{E}$ and $g_\text{E}$ are the path loss and the channel gain at the eavesdropper, $N_\text{E}$ is the eavesdropper's array gain,  and $z_\text{E} \sim \mathcal{CN} (0,\sigma_\text{E}^2)$ is the additive noise at the eavesdropper. The term $\beta$ in (\ref{arnl}) is
\begin{eqnarray}
\nonumber  \beta &=& \sqrt{\frac{1}{{N_\text{T}}}} \bigg (\sum_{m\in \mathcal{I}_M(k)} e^{j \left(\frac{N-1}{2}-m\right)  \frac{2\pi d}{\lambda} (\cos\theta_\text{R} - \cos\theta_\text{E})} + \\ \nonumber  &&  \sum_{n \in \mathcal{E}_L(k)}  e^{j \left(\frac{N-1}{2}-n\right)  \frac{2\pi d}{\lambda}  \cos(\theta_\text{R}-\theta_\text{E}) }   - \\ \label{anb}  &&  \sum_{n \in \mathcal{O}_L(k)} e^{j \left(\frac{N-1}{2}-n\right)  \frac{2\pi d}{\lambda}  \cos(\theta_\text{R}-\theta_\text{E}) } \bigg).
\end{eqnarray}
Since the entries of $\mathcal{I}_M(k)$ and $\mathcal{I}_L(k)$ are randomly selected for each data symbol, (\ref{anb}) can be simplified to
\begin{eqnarray}\label{bbeta}
  \beta &=& \sqrt{\frac{1}{{N_\text{T}}}} \sum_{n=0}^{N_\text{T}-1} W_n e^{j \left(\frac{N-1}{2}-n\right)  \frac{2\pi d}{\lambda} (\cos\theta_\text{R} - \cos\theta_\text{E})},
\end{eqnarray}
where $W_n$ is a Bernoulli random variable and $W_n = 1$ with probability $(M+(N_\text{T}-M)/2)/N_\text{T} = \frac{N_\text{T}+M}{2N_\text{T}}$, and $W_n = -1$ with probability $\frac{N_\text{T}-M}{2N_\text{T}}$.

\section{Performance Evaluation}\label{sec:ana}

In this section, we evaluate the performance of the proposed mmWave secure transmission technique. As shown in (\ref{arnl}), the received signal at an arbitrary eavesdropper is subject to both multiplication and additive noise. This makes the derivation of a simple closed form expression for the signal-to-noise ratio (SNR) and a  secrecy rate formula challenging, if not possible. To evaluate the performance of the proposed mmWave secure transmission technique, we define the the secrecy throughput $R$ (bits per channel use) as 
\begin{eqnarray}\label{Rs}
R = [\log_2(1+\gamma_{\text{R}}) -\log_2(1+\gamma_{\text{E}})]^+,
 \end{eqnarray}
where $\gamma_{\text{R}}$ is the SNR at the target receiver, $\gamma_{\text{E}}$  is the SNR at the eavesdropper, and $a^+$  denotes $\max\{ 0,a \}$.
From (\ref{c1}) and (\ref{y1b1s}), the SNR at the receiver can be expressed as
\begin{eqnarray}
\gamma_{\text{R}} &=& \frac{{P \alpha N_\text{R}}(\mathbb{E}[\mathbf{h}^*(\theta_\text{R})\mathbf{f}(k)])^2}{{P\alpha N_\text{R}}\text{var}[ \mathbf{h}^*(\theta_\text{R})\mathbf{f}(k)]+\sigma^2} \\ \label{snrg} &=& \frac{{P \alpha N_\text{R}} M^2 g_\text{R}^2 } {N_\text{T} \sigma^2} =\frac{{2P \alpha N_\text{R}} M^2  \sin^2 (\Phi_\text{R}/2)  } {N_\text{T} \sigma^2},
\end{eqnarray}  
where (\ref{snrg}) follows since the array gain is constant at the target receiver and $g^2_\text{R} = \frac{1}{2}(1-e^{-j\Phi_\text{R}})(1-e^{j\Phi_\text{R}})= (1-\cos(\Phi_\text{R})) =  2\sin^2 (\Phi_\text{R}/2) $, and $\Phi_\text{R}$ is the phase angle between the receiver's direct and the reflected path.

Similarly, the SNR at an eavesdropper can be expressed as
\begin{eqnarray}
\gamma_{\text{E}} &=& \frac{{P \alpha_\text{E} N_\text{E}}(\mathbb{E} [ \mathbf{h}^*(\theta_\text{E})\mathbf{f}(k)])^2}{{P\alpha_\text{E} N_\text{E}}\text{var}[ \mathbf{h}^*(\theta_{\text{E}})\mathbf{f}(k)]+\sigma_\text{E}^2} \\ \label{snrge} &=& \frac{{P \alpha_\text{E} N_\text{E}} g_\text{E}^2 (\mathbb{E}[\beta])^2}{{P\alpha_\text{E} N_\text{E} g_\text{E}^2 }\text{var}[ \beta]+\sigma_\text{E}^2},
\end{eqnarray}  
where $g^2_\text{E} =2\sin^2 (\Phi_\text{E}/2) $, and $\Phi_\text{E}$ is the phase angle between the eavesdropper's direct and the reflected path. To derive $\gamma_{\text{E}}$, we need to characterize the random variable $\beta$ in (\ref{bbeta}). Note that the random variable $\beta$ is a sum of complex random variable. Invoking the central limit theorem, the artificial noise term $\beta$ can be approximated by a complex Gaussian random variable for sufficiently large $N_{\text{T}}$, and, it can be completely characterized by its mean and variance. To complete the SNR derivation, we introduce the following lemma

\begin {lemma} \label{l1}
For large number of antennas $ N_{\text{T}}$ and $\theta_\text{R} \not = \theta_\text{E}$, 

\begin{eqnarray}\label{l1m}
\mathbb{E} [\beta] = {\frac{{M}}{N_{\text{T}}\sqrt{N_{\text{T}}}}}\frac{ \sin\left( {N_{\text{T}}} \frac{\pi d}{\lambda}(\cos \theta_{\text{R}} -\cos \theta_{\text{E}})  \right)} {\sin\left(  \frac{\pi d}{\lambda}(\cos \theta_{\text{R}} -\cos \theta_{\text{E}})  \right)},
\end{eqnarray}  
and 
 \begin{eqnarray}\label{l1v}
 \text{var} [\beta] \approx \frac{N_{\text{T}}^2-M^2}{N^2_{\text{T}}}.
\end{eqnarray}  
\end{lemma}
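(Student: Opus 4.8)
The plan is to treat the Bernoulli weights $\{W_n\}$ in (\ref{bbeta}) as the only source of randomness and to reduce both moments of $\beta$ to a single deterministic trigonometric sum, namely the Dirichlet kernel. Abbreviating $u = \frac{2\pi d}{\lambda}(\cos\theta_\text{R} - \cos\theta_\text{E})$, I would write $\beta = \frac{1}{\sqrt{N_\text{T}}}\sum_{n=0}^{N_\text{T}-1} W_n\, e^{j(\frac{N-1}{2}-n)u}$, so that all the array geometry sits in the phases while all the randomness sits in the $W_n$.

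First I would compute the mean. By linearity, $\mathbb{E}[\beta] = \frac{\mathbb{E}[W_n]}{\sqrt{N_\text{T}}}\sum_{n=0}^{N_\text{T}-1} e^{j(\frac{N-1}{2}-n)u}$, where the stated two-point law gives $\mathbb{E}[W_n] = (+1)\frac{N_\text{T}+M}{2N_\text{T}} + (-1)\frac{N_\text{T}-M}{2N_\text{T}} = \frac{M}{N_\text{T}}$. The remaining sum is symmetric about $n = \frac{N-1}{2}$; substituting $m = \frac{N-1}{2}-n$ turns it into $\sum_m e^{jmu}$ over indices symmetric about zero, which is the closed-form Dirichlet kernel $\frac{\sin(N_\text{T} u/2)}{\sin(u/2)}$. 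Since $u/2 = \frac{\pi d}{\lambda}(\cos\theta_\text{R}-\cos\theta_\text{E})$, collecting the $\frac{M}{N_\text{T}\sqrt{N_\text{T}}}$ prefactor yields (\ref{l1m}) directly.

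For the variance I would use $\text{var}[\beta] = \mathbb{E}[|\beta|^2] - |\mathbb{E}[\beta]|^2$ and expand $|\beta|^2 = \frac{1}{N_\text{T}}\sum_{n,m} W_n W_m\, e^{j(\phi_n-\phi_m)}$ with $\phi_n = (\frac{N-1}{2}-n)u$. Treating the $W_n$ as independent, the diagonal $n=m$ terms use $\mathbb{E}[W_n^2] = 1$ (since $W_n=\pm1$) and contribute $\frac{N_\text{T}}{N_\text{T}}=1$, while the off-diagonal terms factor as $\mathbb{E}[W_n]\mathbb{E}[W_m] = (M/N_\text{T})^2$ and reassemble into $\frac{M^2}{N_\text{T}^3}(|S|^2 - N_\text{T})$, where $S = \sum_n e^{j\phi_n}$ is the same Dirichlet sum as above. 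Subtracting $|\mathbb{E}[\beta]|^2 = \frac{M^2}{N_\text{T}^3}|S|^2$ cancels the $|S|^2$ terms exactly, leaving $\text{var}[\beta] = 1 - \frac{M^2}{N_\text{T}^2} = \frac{N_\text{T}^2 - M^2}{N_\text{T}^2}$, which is (\ref{l1v}).

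The main obstacle, and the reason (\ref{l1v}) is stated only approximately, is the independence invoked in the off-diagonal step. The subset construction draws exactly $M$ indices into $\mathcal{I}_M(k)$ and splits the complement into even and odd halves, so the $W_n$ are in truth weakly negatively correlated (a sampling-without-replacement effect) rather than i.i.d. I would therefore justify the factorization asymptotically: the pairwise correlations are $O(1/N_\text{T})$, and their aggregate contribution to $\text{var}[\beta]$ vanishes as $N_\text{T}\to\infty$, which is consistent with the large-array hypothesis of the lemma and with the central-limit approximation already invoked for $\beta$ before the statement. The mean (\ref{l1m}) requires no such approximation, since it depends only on the marginal law of each $W_n$ and not on their joint distribution.
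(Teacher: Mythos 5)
The paper itself contains no proof of Lemma \ref{l1}: it defers to reference \cite{me}, listed as ``in preparation,'' so there is no in-paper argument to compare yours against. Judged on its own, your derivation is correct and is the natural one given the model (\ref{bbeta}) that the paper supplies. The mean computation is exact: $\mathbb{E}[W_n]=M/N_\text{T}$ follows from the stated two-point law, and the phase sum is the Dirichlet kernel, $\sum_{n=0}^{N_\text{T}-1} e^{j(\frac{N_\text{T}-1}{2}-n)u}=\sin(N_\text{T}u/2)/\sin(u/2)$, which with $u/2=\frac{\pi d}{\lambda}(\cos\theta_\text{R}-\cos\theta_\text{E})$ gives (\ref{l1m}). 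The variance computation under independence is also right: the diagonal terms give $1$ (since $W_n^2=1$), the off-diagonal terms reassemble into $\frac{M^2}{N_\text{T}^3}\left(|S|^2-N_\text{T}\right)$ with $S=\sum_n e^{j\phi_n}$, and the $|S|^2$ pieces cancel against $|\mathbb{E}[\beta]|^2$, leaving $(N_\text{T}^2-M^2)/N_\text{T}^2$.

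One refinement would close the only loose end. Your diagnosis that the $W_n$ are weakly negatively correlated (sampling without replacement) is the right explanation for the ``$\approx$'' in (\ref{l1v}), and it can be made quantitative: if the construction fixes $|\mathcal{I}_M|=M$ and splits the complement exactly in half, then $\sum_n W_n=M$ is deterministic, and exchangeability forces a common pairwise covariance $c=-(1-M^2/N_\text{T}^2)/(N_\text{T}-1)=O(1/N_\text{T})$, so the correction to the variance is $\frac{c}{N_\text{T}}\left(|S|^2-N_\text{T}\right)$. This is precisely where the hypothesis $\theta_\text{R}\neq\theta_\text{E}$ enters the variance claim, not only the mean: for fixed $u\neq 0$ one has $|S|^2=O(1)$, so the correction is $O(1/N_\text{T})$ and vanishes as you assert; but at $u=0$ one has $|S|^2=N_\text{T}^2$, the correction becomes $-(1-M^2/N_\text{T}^2)$, and the true variance collapses to $0$ rather than to (\ref{l1v}). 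Your write-up invokes the vanishing of the aggregate correlation without tying it to $\theta_\text{R}\neq\theta_\text{E}$; adding that link makes the argument complete.
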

\begin{proof}
The proof can be found in \cite{me} and is omitted for space limitation.
\end {proof}

		\begin{figure}[t]
		\begin{center}
\includegraphics[width=3.5in]{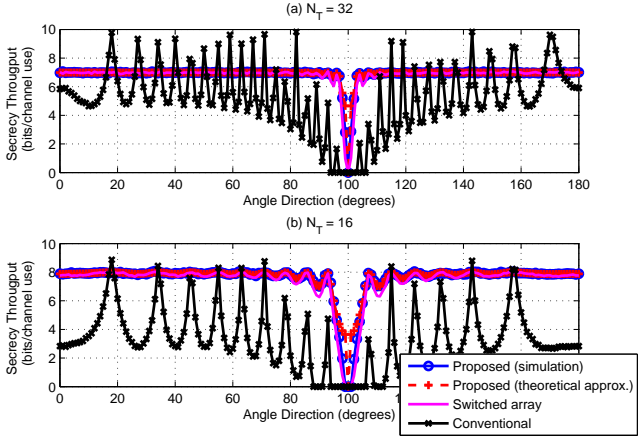}
				\caption{Numerical and theoretical (eq. (\ref{sar})) values of the secrecy throughput versus the eavesdropper's angular direction; (a) $N_\text{T}=32$ and $M=24$, (b) $N_\text{T}=16$ and $M=12$. }
			\label{fig1}
		\end{center}
	\end{figure}

Using  (\ref{l1m}) and (\ref{l1v}) $\gamma_\text{E}$ in (\ref{snrge}) becomes
 \begin{eqnarray}\label{snree2}
\nonumber \gamma_\text{E}  &\approx& \frac{1}{ \frac{N_{\text{T}}^2-M^2}{N^2_{\text{T}}}+\frac{\sigma^2_\text{E}}{2P\alpha_\text{E} N_\text{E} \sin^2(\Phi_\text{E}/2) }} \times \\ && \label{se} \bigg(  {\frac{{M^2}}{N^3_{\text{T}}}}\frac{ \sin^2\left( {N_{\text{T}}} \frac{\pi d}{\lambda}(\cos \theta_{\text{R}} -\cos \theta_{\text{E}})  \right)} {\sin^2\left(  \frac{\pi d}{\lambda}(\cos \theta_{\text{R}} -\cos \theta_{\text{E}})  \right)} \bigg),
\end{eqnarray}
where the approximation arises from $(\ref{l1v})$.
  
Substituting (\ref{snrg}) and (\ref{se}) in (\ref{Rs}), we obtain
\begin{eqnarray}\label{Rsb1}
\nonumber   R  &\approx&  \bigg[ \log_2\bigg(1+\frac{{2P \alpha N_\text{R}} M^2  \sin^2 (\Phi_\text{R}/2)  } {N_\text{T} \sigma^2} \bigg) -   \\ \nonumber &&  \log_2\bigg(1+\frac{1}{ \frac{N_{\text{T}}^2-M^2}{N^2_{\text{T}}}+\frac{\sigma^2_\text{E}}{2P\alpha_\text{E} N_\text{E} \sin^2(\Phi_\text{E}/2) }} \times \\ \label{sar} &&  \bigg(  {\frac{{M^2}}{N^3_{\text{T}}}}\frac{ \sin^2\left( {N_{\text{T}}} \frac{\pi d}{\lambda}(\cos \theta_{\text{R}} -\cos \theta_{\text{E}})  \right)} {\sin^2\left(  \frac{\pi d}{\lambda}(\cos \theta_{\text{R}} -\cos \theta_{\text{E}})  \right)} \bigg) \bigg)\bigg]^+.
 \end{eqnarray}

\section{Numerical Results and Discussions } \label{sec:PA}
In this section, we consider vehicle-to-vehicle transmission and study the performance of the proposed security technique in the presence of an eavesdropper. For comparison, we simulate and plot the secrecy throughput of conventional array (without PHY security) and switched array \cite{dm5} transmission techniques. Unless otherwise specified, the number of antennas at the transmitter is $N_\text{T} = 32$, $|g_\text{R}|^2 N_\text{R} = 8$, and $|g_\text{E}|^2N_\text{E} = 32$. The system operates at 60 GHz with a bandwidth of 50 MHz and an average transmit power of 37 dBm. The distance from the transmitter to the receiver is 30 meters, the distance from the transmitter to the eavesdropper is 10 meters, and the path loss exponent is fixed to 2.


In Fig. \ref{fig1}, we plot the secrecy throughput achieved by the proposed technique versus the eavesdropper's angular location when using 32 and 16 transmit antennas. We also plot the secrecy throughput achieved by conventional array and switched transmission methods. In Figs. \ref{fig1}(a) and (b), we  show that the secrecy throughput of the proposed technique is high at all angular locations except at the target receiver's angular location ($\theta_\text{R}=100$ degrees).  We also show that switched array techniques provide comparable secrecy throughput with the proposed technique, while conventional array techniques provide poor secrecy throughputs, especially for lower number of transmit antennas as shown in Fig. \ref{fig1}(b). The reason for this is that conventional array transmission techniques result in a constant radiation pattern at the eavesdropper while, the proposed and the switched array techniques randomize the radiation pattern at an eavesdropper, thereby creating artificial noise. For the proposed and switched array transmission techniques, no randomness is experienced at the target receiver, and the secrecy throughput is minimum when the eavesdropper is located in the same angle with the target receiver, which is generally not the case in practice. We also observe that the proposed technique performs better than switched array techniques especially when the eavesdropper is closely located to the legitimate receiver, which is typically the worst case scenario.

\begin{figure}[t]
		\begin{center}
\includegraphics[width=3.5in]{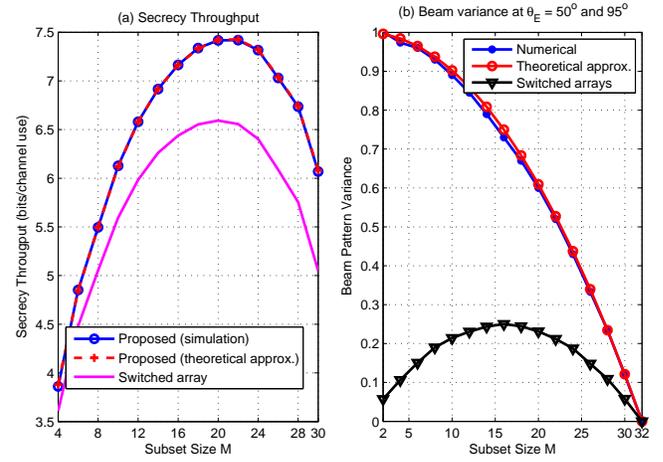}
				\caption{ (a) Numerical and theoretical (eq. (\ref{sar})) values of the secrecy throughput  versus the transmission subset size $M$; $N_\text{T}=32$, $\theta_\text{R} = 100^\circ$, and $\theta_\text{E} = 95^\circ$. (b) Beam variance at  $\theta_\text{E} = 50^\circ$ and $\theta_\text{E} = 95^\circ$ (superimposed).} 
			\label{fig2}
		\end{center}
	\end{figure}


In Figs. \ref{fig2}(a) and (b) we examine the impact of the transmission subset size $M$  on the secrecy throughput and the generated beam pattern variance (or artificial noise). In Fig. \ref{fig2}(a), we observe that as the subset size $M$ increases, the secrecy throughput of the proposed technique increases, plateaus, and then decreases. The reason for this is that as $M$ increases, more antennas are co-phased for data transmission. On one hand, this increases the beamforming gain at the target receiver. On the other hand, increasing $M$ decreases the variance of the artificial noise at the eavesdropper as shown in Fig. \ref{fig2}(b). Therefore, there is a trade-off between the beamforming gain at the receiver and the artificial noise at a potential eavesdropper, and there exists the optimum value of $M$ that maximizes the secrecy throughput. In Figs. \ref{fig2}(a) and (b), we also show that the proposed technique provides higher secrecy throughput and beam pattern variance (or artificial noise) when compared to the switched array technique without the need for additional antenna switches as required by the switched array technique. In the switched array technique, only $M$ antennas are used to generate the artificial noise, while the remaining $N_\text{T}-M$ antennas are idle. The proposed technique uses $N_\text{T}-M$ antennas to generate artificial noise at potential eavesdroppers, and hence, results in higher secrecy throughput and artificial noise. In addition to reducing the beam pattern variance at a potential eavesdropper, the idle antennas create a sparse array which could be exploited to launch attacks on switched array techniques. Since the proposed technique uses all antennas, it is difficult to breach. Finally, in Fig. \ref{fig2}(a), we show that the theoretical approximation of the secrecy throughput  is tight when compared to the secrecy throughput achieved by Monte Carlo simulation results. This enables us to gain further insights into the impact of key parameters such as, the subset size $M$ and number of transmit/receive antennas, on the system performance.

\section{Conclusions} \label{sec:con}

In this paper,  we proposed a PHY layer security technique for secure mmWave vehicular communication. The proposed technique takes advantage of the large antenna arrays at the mmWave frequencies to jam eavesdroppers with sensitive receivers. This enhances the security of the communication link between a  transmitter and a legitimate receiver.  The proposed technique is shown to provide high secrecy throughput when compared to conventional and switched array techniques  without the need for encryption/decryption keys and additional communication and processing overhead at the target receiver. This makes the proposed security technique  favorable for time-critical road safety applications and vehicular communications in general.

\section*{Acknowledgement} \label{sec:ackn}

This research was partially supported by the U.S. Department of Transportation through the Data-Supported Transportation Operations and Planning (D-STOP) Tier 1 University Transportation Center and by the Texas Department of Transportation under Project 0-6877 entitled “Communications and Radar-Supported Transportation Operations and Planning (CAR-STOP).

{}
\end{document}